\documentclass[11pt]{article}

\usepackage[a4paper,margin=30mm]{geometry}
\usepackage{amsmath,amssymb,amsthm,mathtools}
\usepackage{bm}
\usepackage[numbers,sort&compress]{natbib}
\usepackage[hidelinks]{hyperref}

\newcommand{\RR}{\mathbb R}
\newcommand{\Tr}{\operatorname{Tr}}
\newcommand{\Id}{\mathbb I}
\newcommand{\Om}{\Omega}
\newcommand{\Sp}{\operatorname{Sp}}
\newcommand{\Sym}{\operatorname{Sym}}

\newcommand{\PiZ}{\Pi_0}
\newcommand{\PiT}{\Pi_2}

\newtheorem{theorem}{Theorem}[section]
\newtheorem{proposition}[theorem]{Proposition}

\theoremstyle{definition}
\newtheorem{definition}[theorem]{Definition}
\newtheorem{problem}[theorem]{Open problem}
\theoremstyle{remark}
\newtheorem{remark}[theorem]{Remark}

\title{Gaussian Purification Quotients and Fixed Nielsen Penalties}
\author{Christian Kerskens\\
\small Trinity College Institute of Neuroscience, Trinity College Dublin,
Dublin, Ireland}
\date{}

\begin{document}
\maketitle

\begin{abstract}
Information distance and circuit complexity are both obtained by minimizing
lengths, but they minimize over different objects.  We make this distinction
explicit for faithful one-mode Gaussian states.  First, invariant-form
uniqueness implies that no positive-definite quadratic gate cost can be
invariant under the full adjoint action of the noncompact symplectic group;
a positive Cartan majorant necessarily introduces additional reference data.
The Uhlmann purification
quotient realizes the Bures metric, and the radial covariance direction
requires a system--ancilla coupling because system-only Gaussian unitaries
preserve the Williamson eigenvalue.  We then minimize fixed right-invariant
quadratic norms on the minimal two-mode Gaussian gate algebra
\(\mathfrak{sp}(4,\mathbb R)\).  For the unweighted Frobenius norm, the
quotient coefficients for radial and traceless covariance tangents are
\(G_0=[\hbar^2(u-1)]^{-1}\) and
\(G_2=[\hbar^2(3u-1)]^{-1}\), where
\(u=(2\nu/\hbar)^2\).  Their ratio does not equal the Bures ratio. The radial coefficient, however, reproduces the Bures value
exactly at every \(u\); the mismatch is confined to the traceless sector. More
generally, a constant block-diagonal two-weight schedule gives
\(G_0/G_2=1+2(\beta/\alpha)u/(u-1)\); matching Bures throughout the isotropic
family would require the state-dependent relation \(\beta/\alpha=1/u\).
At the Bures--Fisher determinant crossing \(u=\varphi\), pointwise matching
is possible only by inserting \(\beta/\alpha=\varphi^{-1}\).  Thus the Bures
purification quotient is an exact state-geometric cost, but it is neither an
unweighted symplectic gate cost nor a member of this fixed two-weight Nielsen
family.  The existence of a more general fixed positive gate norm realizing
the quotient remains open.
\end{abstract}

\section{Introduction}
\label{sec:introduction}

Nielsen's geometric formulation replaces discrete gate counting by an
optimal-control problem on a group of admissible transformations
\cite{Nielsen2006}.  A positive norm on the generator algebra specifies the
relative cost of elementary operations, and right translation turns that norm
into a state-independent circuit geometry.  The choice of this penalty
schedule is part of the physical definition of the problem.

For Gaussian field-theory states, Nielsen geometries were developed for
pure states in \cite{JeffersonMyers2017,ChapmanEtAl2018}; mixed-state
extensions based on purification, Fisher-information geometry, and the
Uhlmann construction include
\cite{CamargoEtAl2019,DiGiulioTonni2020,Ruan2021}.

Mixed-state information geometry supplies a different minimization.  The
Bures distance is the minimum Fubini--Study distance over purifications
\cite{Uhlmann1976}, and the path-length version and its Gaussian saturation
were analyzed by Ruan \cite{Ruan2020}.  This quotient fixes a state-space
metric.  It does not automatically fix a right-invariant gate norm because
the variance, and hence the Fubini--Study norm, of a fixed generator depends
on the state on which it acts.

The distinction is especially sharp for bosonic Gaussian states.  Their
unitary transformations are symplectic, while mixedness is encoded by the
Williamson eigenvalues \cite{Weedbrook2012}.  Consequently, traceless shape
motion can be generated on the system alone, whereas a radial change of a
one-mode isotropic covariance requires an environment or a purification.
This provides a minimal setting in which a state metric and a gate metric can
be compared by an explicit constrained minimization.

The companion foundation analysis identifies the state-independent
correction to the Gaussian Bures cometric with the Killing form and proves
its invariant uniqueness \cite{KerskensFoundation2026}.  Here we derive its
circuit-geometric consequence: full adjoint symplectic invariance is
incompatible with a positive quadratic Nielsen cost.  We then address the
remaining operational question by constructing the minimal two-mode Gaussian
purification, writing the full \(\mathfrak{sp}(4,\mathbb R)\) lift constraint,
and computing the quotient of two fixed positive norm families.  The general
invariant obstruction and the restricted quotient no-go play complementary
roles: the former explains why extra schedule data are unavoidable, while the
latter tests two concrete positive choices.

Section~\ref{sec:notions} separates the three geometries in play.
Section~\ref{sec:sectors} gives the covariance-sector decomposition and the
system-only obstruction.  Section~\ref{sec:bures-quotient} records the exact
Bures purification quotient.  Sections~\ref{sec:lift} and
\ref{sec:minimization} solve the symplectic generator minimization.
Section~\ref{sec:golden} extracts the golden-point corollary, and
Section~\ref{sec:scope} states the remaining fixed-norm problem.

\section{State distance, purification length, and Nielsen cost}
\label{sec:notions}

Let a centered one-mode Gaussian state have covariance
\begin{equation}
 \Sigma\in\Sym_{++}(2),
 \qquad
 \Sigma+\frac{i\hbar}{2}\Om>0,
 \qquad
 \Om=\begin{pmatrix}0&1\\-1&0\end{pmatrix}.
 \label{eq:domain}
\end{equation}
At an isotropic point we write
\begin{equation}
 \Sigma_\nu=\nu\Id_2,
 \qquad
 \nu>\frac{\hbar}{2},
 \qquad
 u=\left(\frac{2\nu}{\hbar}\right)^2>1.
 \label{eq:isotropic}
\end{equation}

Three length constructions must be kept separate.

\begin{definition}[Bures state length]
For a covariance path \(s\mapsto\Sigma(s)\), the covariance Bures length is
\begin{equation}
 L_{\rm B}[\Sigma]
 =\int_0^1
 \sqrt{g_{{\rm B},\Sigma(s)}
 \bigl(\dot\Sigma(s),\dot\Sigma(s)\bigr)}\,\mathrm ds.
 \label{eq:bures-length}
\end{equation}
It is a length on the reduced state manifold.
\end{definition}

\begin{definition}[Purification quotient]
Let \(|\Psi(s)\rangle\) range over purified paths whose reduced state is
\(\rho(s)\).  The purification length is the infimum of their
Fubini--Study lengths, including minimization over vertical ancillary
motions and purified endpoints.
\end{definition}

\begin{definition}[Right-invariant Nielsen length]
Let \(\mathcal S(s)\) be a path in a fixed Gaussian gate group and
\begin{equation}
 K(s)=\dot{\mathcal S}(s)\mathcal S(s)^{-1}
 \label{eq:right-velocity}
\end{equation}
its right velocity.  A state-independent quadratic schedule is a fixed
positive-definite form \(F(K)^2\) on the gate algebra.  Its circuit length is
\begin{equation}
 L_{\rm N}[\mathcal S]=\int_0^1F(K(s))\,\mathrm ds.
 \label{eq:nielsen-length}
\end{equation}
\end{definition}

The first two constructions coincide after the appropriate purification
minimization.  The third agrees with them only if a fixed positive gate norm,
an admissible generator set, and the reduction map together induce the same
quotient tensor.

The first obstruction is algebraic and does not depend on a particular
purification.

\begin{proposition}[No fully symplectic-invariant quadratic Nielsen cost]
\label{prop:invariant-no-go}
Let \(N\geq1\), and let \(b\) be a symmetric bilinear form on
\(\mathfrak{sp}(2N,\RR)\) satisfying
\begin{equation}
 b(\operatorname{Ad}_S X,\operatorname{Ad}_S Y)=b(X,Y)
 \qquad
 \text{for every }S\in\Sp(2N,\RR).
 \label{eq:ad-invariant-cost}
\end{equation}
Then \(b=c\tau\), where \(\tau(X,Y)=\Tr(XY)\) and \(c\in\RR\).  Consequently,
\(b\) is either degenerate or indefinite and cannot define a
positive-definite Nielsen metric.
\end{proposition}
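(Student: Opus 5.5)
The plan is to reduce the group-level invariance \eqref{eq:ad-invariant-cost} to an infinitesimal statement and then invoke simplicity of \(\mathfrak{sp}(2N,\RR)\).

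\emph{Step 1: differentiate.} Take \(S=e^{tZ}\) with \(Z\in\mathfrak{sp}(2N,\RR)\) and differentiate \eqref{eq:ad-invariant-cost} at \(t=0\). This gives
\begin{equation}
 b([Z,X],Y)+b(X,[Z,Y])=0 .
\end{equation}
Hence the associated linear map \(\hat b:\mathfrak g\to\mathfrak g^*\) intertwines the adjoint and coadjoint representations.

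\emph{Step 2: identify \(\hat b\) via the trace form.} The form \(\tau(X,Y)=\Tr(XY)\) is nondegenerate and invariant on \(\mathfrak g=\mathfrak{sp}(2N,\RR)\); it is a nonzero multiple of the Killing form, and \(\mathfrak g\) is simple. So \(\hat\tau\) is an isomorphism \(\mathfrak g\to\mathfrak g^*\). Then \(A=\hat\tau^{-1}\hat b\) is an endomorphism of \(\mathfrak g\) commuting with every \(\operatorname{ad}_Z\).

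\emph{Step 3: Schur's lemma (the main obstacle).} We must show \(A=c\,\Id\) with \(c\) real. Since the field is \(\RR\), Schur alone only says the commutant is a real division algebra, which could be \(\RR\), \(\mathbb C\) or \(\mathbb H\). The cleanest route is to use that \(\mathfrak{sp}(2N,\RR)\) is a split real form, so its complexification \(\mathfrak{sp}(2N,\mathbb C)\) is simple. Then \(\mathfrak g\otimes\mathbb C\) is an irreducible adjoint module, and complex Schur gives \(A\otimes\mathbb C=c\,\Id\). Since \(A\) is real, \(c\in\RR\).

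As an alternative I would try first, if citing the split-form fact feels heavy: take a Cartan subalgebra of diagonal-type elements \(\operatorname{diag}(h,-h)\). Then \(A\) commutes with \(\operatorname{ad}_H\), whose real root spaces are one-dimensional with distinct nonzero real eigenvalues. So \(A\) preserves each root line and acts there by a real scalar. Commutation with root vectors then propagates a single scalar across all root lines and onto the Cartan subalgebra. Either way, \(b=c\tau\). (Symmetry of \(b\) is automatic here, and the companion result \cite{KerskensFoundation2026} could be cited for this uniqueness.)

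\emph{Step 4: signature.} If \(c=0\), then \(b\) is degenerate. If \(c\neq0\), it suffices to show that \(\tau\) takes both signs.

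For \(N=1\), compute directly:
\begin{itemize}
 \item \(X=\operatorname{diag}(1,-1)\) gives \(\Tr X^2=2>0\);
 \item \(X=\Om\) gives \(\Tr\Om^2=-2<0\).
\end{itemize}

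For general \(N\), embed the same two elements block-diagonally, or more generally:
\begin{itemize}
 \item a symmetric element \(X=X^T\in\mathfrak g\) (a squeezing generator) has \(\Tr X^2>0\);
 \item an antisymmetric element (the compact \(\mathfrak u(N)\) part) has \(\Tr X^2=-\Tr X^TX<0\).
\end{itemize}
So \(c\tau\) is indefinite and never positive definite, which gives the conclusion.
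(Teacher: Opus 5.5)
Your proof is correct and follows the same route as the paper: invariance plus simplicity forces $b$ to be a multiple of the trace form $\tau$, and $\tau$ is positive on the symmetric (noncompact, $\mathfrak p$) elements and negative on the antisymmetric (compact, $\mathfrak k\cong\mathfrak u(N)$) elements, so no multiple of it is positive definite. Your Step~3 makes explicit a point the paper leaves to a citation: real simplicity alone is not enough, and it is the simplicity of the complexification $\mathfrak{sp}(2N,\mathbb C)$ that makes the space of invariant forms one-dimensional.
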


\begin{proof}
The real symplectic Lie algebra is simple, so every invariant symmetric
bilinear form is proportional to its Killing form, equivalently to the trace
form \(\tau\) \cite{KerskensFoundation2026}.  Under the Cartan decomposition
\begin{equation}
 \mathfrak{sp}(2N,\RR)=\mathfrak k\oplus\mathfrak p,
 \qquad \mathfrak k\cong\mathfrak u(N),
 \label{eq:cartan-split}
\end{equation}
the trace form is negative definite on \(\mathfrak k\) and positive definite
on \(\mathfrak p\).  Every nonzero multiple is therefore indefinite, while
the zero multiple is degenerate.
\end{proof}

A positive replacement can be formed only after an additional choice.  For
the standard Cartan involution \(\theta(X)=-X^{\mathsf T}\), define
\begin{equation}
 \langle X,Y\rangle_\theta
 =-\tau(X,\theta Y)=\Tr(XY^{\mathsf T}).
 \label{eq:cartan-majorant}
\end{equation}
This Cartan majorant is positive definite, but the choice of \(\theta\), or
equivalently of a reference Euclidean complex structure, is part of the gate
schedule.  The form is invariant under the associated maximal compact
subgroup rather than under the full adjoint action of \(\Sp(2N,\RR)\).
Proposition~\ref{prop:invariant-no-go} does not prohibit positive
right-invariant Nielsen geometries: right invariance does not require adjoint
invariance.  It shows instead that every positive schedule necessarily
contains structure not fixed by full symplectic invariance alone.

\section{Covariance sectors and the system-only obstruction}
\label{sec:sectors}

Use the trace-orthonormal basis
\begin{equation}
 E_0=\frac{\Id_2}{\sqrt2},\qquad
 E_+=\frac1{\sqrt2}\begin{pmatrix}1&0\\0&-1\end{pmatrix},\qquad
 E_\times=\frac1{\sqrt2}\begin{pmatrix}0&1\\1&0\end{pmatrix}.
 \label{eq:basis}
\end{equation}
Then
\begin{equation}
 \Sym(2)=\mathbb RE_0\oplus
 \operatorname{span}\{E_+,E_\times\}.
 \label{eq:sector-split}
\end{equation}
The scalar sector changes the covariance scale and the two-dimensional
traceless sector changes its shape.

An infinitesimal system-only Gaussian unitary acts by
\begin{equation}
 \delta_Y\Sigma=Y\Sigma+\Sigma Y^{\mathsf T},
 \qquad Y\in\mathfrak{sp}(2,\mathbb R).
 \label{eq:orbit-tangent}
\end{equation}

\begin{proposition}[System-only unitary obstruction]
\label{prop:system-obstruction}
At \(\Sigma_\nu=\nu\Id_2\), the tangent space of the symplectic orbit is
exactly the traceless sector:
\begin{equation}
 T_{\Sigma_\nu}\bigl(\Sp(2,\mathbb R)\cdot\Sigma_\nu\bigr)
 =\operatorname{span}\{E_+,E_\times\}.
 \label{eq:orbit-sector}
\end{equation}
The radial direction \(E_0\) cannot be generated by a system-only Gaussian
unitary.
\end{proposition}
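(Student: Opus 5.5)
The plan is to compute the image of the linear map \(Y\mapsto \delta_Y\Sigma_\nu = Y\Sigma_\nu+\Sigma_\nu Y^{\mathsf T}\) on \(\mathfrak{sp}(2,\RR)\) explicitly. The tangent space of the orbit \(\Sp(2,\RR)\cdot\Sigma_\nu\) at \(\Sigma_\nu\) is exactly this image, since the orbit map \(S\mapsto S\Sigma_\nu S^{\mathsf T}\) has differential \(Y\mapsto\delta_Y\Sigma_\nu\) at the identity.

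At the isotropic point the map reduces to \(\delta_Y\Sigma_\nu=\nu(Y+Y^{\mathsf T})\), so only the symmetric part of \(Y\) contributes. For one mode, \(\mathfrak{sp}(2,\RR)=\mathfrak{sl}(2,\RR)\) is the set of traceless real \(2\times2\) matrices, because \(Y^{\mathsf T}\Om+\Om Y=0\) is equivalent to \(\Tr Y=0\) in dimension two. Write \(Y=aE_++bE_\times+c\,\Om\) with \(\Om\) antisymmetric.

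With this decomposition, \(Y+Y^{\mathsf T}=2(aE_++bE_\times)\) and \(\delta_Y\Sigma_\nu=2\nu(aE_++bE_\times)\). As \(a,b\) range over \(\RR\), this gives all of \(\operatorname{span}\{E_+,E_\times\}\), which is the claimed equality. The rotation generator \(\Om\) lies in the stabilizer, and the kernel is \(\RR\,\Om\). A dimension count confirms the result: \(\dim\mathfrak{sp}(2,\RR)=3\), the stabilizer is \(\mathfrak{so}(2)\), so the orbit has dimension \(2\).

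For the second claim, the same computation gives a coordinate-free reason: for any \(Y\), \(\Tr(\delta_Y\Sigma_\nu)=2\nu\Tr Y=0\). The \(E_0\)-component of the image is proportional to this trace, so it vanishes, and \(E_0\) is not in the image. This is the infinitesimal form of the invariance of \(\det\Sigma\) under \(\Sp(2,\RR)\), since \(\det S=1\). That invariance is the same as invariance of the Williamson eigenvalue \(\sqrt{\det\Sigma}\), which the radial direction changes at first order: \(\frac{d}{dt}\det(\Sigma_\nu+tE_0)|_{t=0}=\sqrt2\,\nu\neq0\).

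There is no real obstacle in this argument. The only point needing care is the identification \(\mathfrak{sp}(2,\RR)=\mathfrak{sl}(2,\RR)\) together with the sign convention of \(\Om\), which I would check by direct substitution.
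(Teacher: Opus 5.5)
Your proof is correct and takes essentially the same route as the paper. Both reduce to $\delta_Y\Sigma_\nu=\nu(Y+Y^{\mathsf T})$, use tracelessness of $\mathfrak{sp}(2,\RR)$ to exclude the $E_0$ component, realize every symmetric traceless target (the paper via $Y=X/(2\nu)$, you via the $aE_++bE_\times$ part), and tie the radial obstruction to invariance of $\det\Sigma$, i.e.\ of the Williamson eigenvalue. Your explicit decomposition $Y=aE_++bE_\times+c\,\Om$, with kernel $\RR\,\Om$ and the dimension count, simply makes surjectivity and the stabilizer explicit.
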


\begin{proof}
At the isotropic point,
\(\delta_Y\Sigma_\nu=\nu(Y+Y^{\mathsf T})\).  Every element of
\(\mathfrak{sp}(2,\mathbb R)\) is traceless, so this symmetric tangent is
traceless.  Conversely, each symmetric traceless \(2\times2\) matrix \(X\)
is symplectic and is generated by \(Y=X/(2\nu)\).  Radial motion changes
\(\det\Sigma\), equivalently the Williamson eigenvalue, which is invariant
under symplectic congruence.
\end{proof}

The obstruction identifies which controls must be present in an enlarged
circuit, but it does not assign their costs.  In particular, it does not turn
the indefinite Schur fiber of the companion foundation analysis into a
positive ancilla metric.

\section{The exact Bures purification quotient}
\label{sec:bures-quotient}

Uhlmann's theorem identifies the Bures distance with the quotient of
the Fubini--Study distance over purifications \cite{Uhlmann1976}.  In
path-length form, the squared Bures speed equals the minimum squared
Fubini--Study speed among purified lifts, attained by the horizontal
lift \cite{Uhlmann1976,DittmannUhlmann1999}; the equivalence of
mixed-state purification complexity with this quotient is established
in general by Ruan \cite{Ruan2021}.  For Gaussian families the quotient
closes within the Gaussian category:

\begin{proposition}[Gaussian saturation of the Uhlmann quotient]
\label{prop:gaussian-saturation}
For a faithful one-mode Gaussian covariance path, the purification
quotient is attained within the family of pure two-mode Gaussian
purifications: the Fubini--Study speed of the horizontal Gaussian lift
equals the Bures speed of the reduced path.  No non-Gaussian ancilla is
needed for this state-geometric minimization.
\end{proposition}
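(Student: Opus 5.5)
The argument has two ingredients: Uhlmann's characterization of the minimizing lift, and the observation that for Gaussian data this lift is itself Gaussian. By Uhlmann's theorem in path form \cite{Uhlmann1976,DittmannUhlmann1999}, the minimal Fubini--Study speed over all purifications of \(\rho(s)\) is attained by a horizontal lift. Such a lift can be written as \(|\Psi(s)\rangle=(\sqrt{\rho(s)}\,U(s)\otimes\Id)|\Omega\rangle\) with \(\dot{\Psi}\) satisfying the horizontality condition \(\langle\Psi|\dot\Psi\rangle\) real and \(W^\dagger\dot W\) Hermitian for \(W=\sqrt\rho U\). Its squared speed is \(\tfrac14\Tr(\rho L^2)\), where the symmetric logarithmic derivative \(L\) solves \(\dot\rho=\tfrac12(L\rho+\rho L)\). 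The task therefore reduces to showing that, for a Gaussian path, this horizontal lift stays inside the pure two-mode Gaussian family. Once that is established, the minimum over all purifications is attained by a Gaussian one, and no non-Gaussian ancilla can do better.

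\textbf{Step 1: the base point is Gaussian.} I would take the canonical purification \(|\sqrt{\rho}\rangle\rangle=(\sqrt\rho\otimes\Id)|\Omega\rangle\) in the doubled Fock space. Since \(\rho\) is thermal up to a symplectic transformation, \(\sqrt\rho\) is proportional to a Gaussian operator \(\exp(-\tfrac12\hat r^{\mathsf T}H\hat r)\). The vector \(|\sqrt\rho\rangle\rangle\) is then a two-mode squeezed vacuum rotated by local symplectics, hence a pure two-mode Gaussian state whose system marginal has covariance \(\Sigma\).

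\textbf{Step 2: the SLD is quadratic.} For a centered Gaussian family, the symmetric logarithmic derivative is a quadratic form in the quadratures, \(L=\tfrac12\hat r^{\mathsf T}A\hat r-\tfrac12\Tr(A\Sigma)\). The symmetric matrix \(A\) solves the Lyapunov-type equation
\[
\dot\Sigma=\Sigma A\Sigma-\tfrac{\hbar^2}{4}\Om A\Om^{\mathsf T}.
\]
This equation is uniquely solvable on the faithful domain \eqref{eq:domain}. At isotropic points it can be solved explicitly sector by sector using the basis \eqref{eq:basis}.

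\textbf{Step 3: the horizontal flow is a Gaussian unitary flow.} The horizontal lift evolves as \(\dot W=\tfrac12 LW\), so it is generated on the purified vector by the Hermitian operator \(L\otimes\Id\), up to a real scalar. Because \(L\) is quadratic, this generator lies in the metaplectic image of \(\mathfrak{sp}(4,\RR)\oplus\RR\). Integrating the flow from the Gaussian base point of Step 1 therefore keeps the path pure, two-mode, and Gaussian.

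\textbf{Step 4: the speeds agree.} The Fubini--Study speed of this lift equals \(\tfrac14\Tr(\rho L^2)\) minus the squared mean of \(L\), and the mean vanishes by construction. Evaluating the variance of the quadratic \(L\) with Wick's theorem gives \(\tfrac18\Tr(A\Sigma A\Sigma)+\dots\), which is precisely the Gaussian Bures metric \(g_{\rm B}(\dot\Sigma,\dot\Sigma)\). This identification is a known result \cite{Ruan2020}, and I would invoke it rather than recompute it.

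\textbf{Main obstacle.} The delicate point is Step 3: showing that the operator ODE \(\dot W=\tfrac12LW\) really coincides with the purification flow, and that it preserves Gaussianity globally rather than only infinitesimally. I would handle this by passing to covariance matrices. On the two-mode covariance \(\Gamma(s)\) of the purification, the quadratic generator acts as \(\dot\Gamma=K\Gamma+\Gamma K^{\mathsf T}\) with \(K\in\mathfrak{sp}(4,\RR)\). This is a linear ODE with smooth coefficients, so it has a global solution that stays on the orbit of pure states. I would then check that its reduced block reproduces \(\Sigma(s)\), which holds by uniqueness of the solution to the covariance equation.
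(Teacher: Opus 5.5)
Your Steps~1, 2 and~4 are fine and follow the same line as the paper's sketch. Step~3, and your resolution of the ``main obstacle'', rest on a false identification.

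The horizontal equation $\dot W=\tfrac12 LW$, i.e.\ $\dot\Psi=\tfrac12(L\otimes\Id)\Psi$, has a \emph{Hermitian} generator. It is therefore an imaginary-time--type flow, not a unitary one; it preserves the norm on this particular trajectory only because $\Tr(\rho L)=0$. The metaplectic image of $\mathfrak{sp}(4,\RR)\oplus\RR$ consists of anti-Hermitian quadratics, and $\tfrac12 L\otimes\Id$ lies in $i$ times that image, not in it. This is exactly why the paper stresses that the horizontal generator ``is not itself a circuit unitary''.

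Suppose the flow really were metaplectic with the system-local generator $L\otimes\Id$. It would then be a system-only Gaussian unitary, acting as $\dot\Gamma=K\Gamma+\Gamma K^{\mathsf T}$ with $K=Y\oplus0$ and $Y\in\mathfrak{sp}(2,\RR)$. By Proposition~\ref{prop:system-obstruction} such a flow preserves $\det\Sigma$ and leaves the ancilla marginal untouched. Your closing check, that the reduced block reproduces $\Sigma(s)$, therefore fails for every path with a radial component.

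At the covariance level, the horizontal lift actually follows the Riccati-type flow
\[
\dot\Gamma=\Gamma\tilde A\Gamma-\tfrac{\hbar^2}{4}\Om_4\tilde A\Om_4^{\mathsf T},
\qquad \tilde A=A\oplus0 .
\]
This is not a congruence by any $K$ determined by $L$. At $\Gamma_\nu$ with the radial gain $A=\Id_2/\kappa_\nu^2$ it gives $\dot\Gamma=\partial_\nu\Gamma_\nu$, which moves both the ancilla block and the entangling blocks.

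Gaussianity must therefore be argued without unitarity. One route is closure of Gaussian operators (exponentials of quadratic forms, Hermitian ones included) under composition, so that $W(s)$ stays a Gaussian operator. The other route works directly from the Riccati equation:
\begin{itemize}
\item The quantity $\Delta=\Gamma\Om_4\Gamma-\tfrac{\hbar^2}{4}\Om_4$ obeys $\dot\Delta=\Gamma\tilde A\Delta+\Delta\tilde A\Gamma$, so $\Delta(0)=0$ forces $\Delta\equiv0$ and purity is preserved.
\item The system block obeys your Step~2 equation with the same initial data, so by uniqueness it equals $\Sigma(s)$.
\item The remaining blocks satisfy the linear equations $\dot\Gamma_{12}=\Sigma A\Gamma_{12}$ and $\dot\Gamma_{22}=\Gamma_{21}A\Gamma_{12}$, which gives global existence.
\end{itemize}
Only after this, because $\dot\Gamma$ is tangent to the pure-state orbit $\Sp(4,\RR)\cdot\Gamma$, does a circuit generator $K(s)\in\mathfrak{sp}(4,\RR)$ exist that reproduces the projective path up to phase. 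For radial motion that generator necessarily has $B\neq0$, as constructed in Section~\ref{sec:lift}.
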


\begin{proof}[Proof sketch]
The symmetric logarithmic derivative of a faithful centered Gaussian
state is quadratic in the canonical operators
\cite{Monras2013,Banchi2015}.  The horizontal-lift condition
\(\dot W=\tfrac12LW\) on amplitudes therefore has a quadratic
generator, and quadratic generators close under the flow, so the
horizontal lift of a Gaussian path remains Gaussian and its
Fubini--Study speed attains the Uhlmann minimum.  The horizontal
generator is Hermitian and norm-preserving but is not itself a circuit
unitary; a unitary circuit generator inducing the same reduced tangent
is reconstructed separately in \(\mathfrak{sp}(4,\RR)\), up to an
overall phase, in Section~\ref{sec:lift}.
\end{proof}

At \(\Sigma_\nu\), let \(\PiZ\) and \(\PiT\) be the trace-orthogonal
projectors onto the scalar and traceless sectors.  Relative to the classical
covariance Fisher tensor, the Bures cometric has sector eigenvalues
\begin{equation}
 r_0(u)=\frac{u-1}{u},
 \qquad
 r_2(u)=\frac{u+1}{u}.
 \label{eq:bures-cometric-ratios}
\end{equation}
Hence the corresponding relative metric is
\begin{equation}
 G_{\rm B}^{\rm rel}(u)
 =\frac{u}{u-1}\PiZ+\frac{u}{u+1}\PiT,
 \label{eq:bures-relative-metric}
\end{equation}
and its radial-to-traceless squared-cost ratio is
\begin{equation}
 R_{\rm B}(u)=\frac{u+1}{u-1}.
 \label{eq:bures-ratio}
\end{equation}

Equation~\eqref{eq:bures-relative-metric} is an exact local state-space
penalty.  It is state dependent through \(u\).  To decide whether it is also
a Nielsen penalty, one must minimize a fixed norm on a fixed enlarged gate
algebra and compare the resulting quotient.

\section{Minimal two-mode Gaussian lift}
\label{sec:lift}

A centered Gaussian channel acts on covariances as
\begin{equation}
 \Sigma\longmapsto M\Sigma M^{\mathsf T}+N,
 \label{eq:gaussian-channel}
\end{equation}
with complete positivity condition
\begin{equation}
 N+\frac{i\hbar}{2}
 \left(\Om-M\Om M^{\mathsf T}\right)\geq0.
 \label{eq:cp-condition}
\end{equation}
Such reduced dynamics can be realized by a global Gaussian unitary on the
system and a Gaussian environment followed by a partial trace
\cite{Weedbrook2012}.  This supplies the natural control architecture for
radial motion, but a channel dilation alone still does not choose a gate norm.

\begin{definition}[Extended Gaussian Nielsen problem]
Fix a reference purification, a target reduced covariance, a gate algebra
split into system, ancilla, and interaction controls, a positive
state-independent right-invariant norm on that algebra, and an endpoint
equivalence relation identifying purified endpoints with the same reduced
target.  The reduced circuit cost is the infimum of the total Nielsen length
over all admissible circuits and equivalent purified endpoints.
\end{definition}

The local calculation below fixes the minimal ancillary dimension and a
reduced tangent rather than a global endpoint.  It is the infinitesimal
quotient problem associated with this definition.

The canonical one-ancilla purification of \(\Sigma_\nu\) has covariance
\begin{equation}
 \Gamma_\nu=
 \begin{pmatrix}
  \nu\Id_2&\kappa_\nu Z\\
  \kappa_\nu Z&\nu\Id_2
 \end{pmatrix},
 \qquad
 \kappa_\nu=\sqrt{\nu^2-\frac{\hbar^2}{4}},
 \qquad
 Z=\begin{pmatrix}1&0\\0&-1\end{pmatrix}.
 \label{eq:purification}
\end{equation}
With \(\Om_4=\Om\oplus\Om\), it obeys
\begin{equation}
 \Gamma_\nu\Om_4\Gamma_\nu=\frac{\hbar^2}{4}\Om_4,
 \label{eq:purity}
\end{equation}
so it is pure and has system marginal \(\Sigma_\nu\).

A generator \(K\in\mathfrak{sp}(4,\mathbb R)\) has block form
\begin{equation}
 K=\begin{pmatrix}
 A&B\\
 \Om B^{\mathsf T}\Om&D
 \end{pmatrix},
 \qquad
 A,D\in\mathfrak{sp}(2,\mathbb R).
 \label{eq:sp4-block}
\end{equation}
The global covariance velocity
\(\dot\Gamma=K\Gamma_\nu+\Gamma_\nu K^{\mathsf T}\) induces the prescribed
system tangent \(X\) precisely when
\begin{equation}
 X=\nu(A+A^{\mathsf T})
 +\kappa_\nu(BZ+ZB^{\mathsf T}).
 \label{eq:constraint}
\end{equation}
The local ancilla block \(D\) is vertical for this reduced constraint.  The
trace of the first term in Eq.~\eqref{eq:constraint} vanishes, so the
entangling block \(B\) must carry every radial target, as anticipated by
Proposition~\ref{prop:system-obstruction}.

\section{Fixed right-invariant norm quotients}
\label{sec:minimization}

We now extend a positive quadratic norm at the identity to a right-invariant
metric on \(\Sp(4,\mathbb R)\) and minimize it subject to
Eq.~\eqref{eq:constraint}.  This is a gate-level quotient, distinct from the
Fubini--Study purification quotient of Section~\ref{sec:bures-quotient}.

\subsection{The standard unweighted Nielsen schedule (Frobenius norm)}

Consider
\begin{equation}
 F_{\rm Frob}(K)^2=\frac14\Tr(K^{\mathsf T}K).
 \label{eq:frobenius}
\end{equation}
This is a positive state-independent norm at the identity and defines a
right-invariant metric by translation.  Its quadrature basis is part of the
schedule; it is not invariant under the full adjoint symplectic action.
Because \(D\) does not enter the constraint and is orthogonal to the other
blocks, the optimum has \(D=0\).  Moreover,
\(\|\Om B^{\mathsf T}\Om\|_{\rm F}=\|B\|_{\rm F}\), so
\begin{equation}
 F_{\rm Frob}(A,B)^2
 =\frac14\left[
 \Tr(A^{\mathsf T}A)+2\Tr(B^{\mathsf T}B)
 \right].
 \label{eq:frobenius-reduced}
\end{equation}

Rotational symmetry makes the two traceless basis directions equivalent.
For the diagonal representatives \(X_0=\Id_2\) and \(X_2=Z\), off-diagonal
generator components form a decoupled nonnegative quadratic block and vanish
at the minimum.  These two targets have equal covariance Fisher norm at
\(\Sigma_\nu\), so the ratio of their quotient coefficients can be compared
directly with Eq.~\eqref{eq:bures-ratio}.  Explicitly, write
\begin{equation}
 A=\begin{pmatrix}a&b\\c&-a\end{pmatrix},
 \qquad
 B=\begin{pmatrix}d&e\\f&g\end{pmatrix}.
 \label{eq:general-generators}
\end{equation}
For a diagonal target, the only off-diagonal constraint is
\begin{equation}
 \nu(b+c)+\kappa_\nu(f-e)=0.
 \label{eq:offdiagonal-constraint}
\end{equation}
The variables \((b,c,e,f)\) do not enter either diagonal constraint, while
their contribution to Eq.~\eqref{eq:frobenius-reduced} is a positive sum of
squares.  Their unique minimum is zero.  It is therefore sufficient to set
\begin{equation}
 A=\operatorname{diag}(a,-a),
 \qquad B=\operatorname{diag}(b_{11},b_{22}).
 \label{eq:diagonal-ansatz}
\end{equation}

For \(X_0=\Id_2\), the constraint gives
\begin{equation}
 b_{11}=\frac{1-2\nu a}{2\kappa_\nu},
 \qquad
 b_{22}=\frac{-1-2\nu a}{2\kappa_\nu},
 \label{eq:radial-b}
\end{equation}
and hence
\begin{equation}
 E_0(a)=\frac12a^2+
 \frac{1+4\nu^2a^2}{4\kappa_\nu^2}.
 \label{eq:radial-energy}
\end{equation}
The optimum is \(a=0\); radial motion is carried entirely by the interaction
block.  Its quotient coefficient is
\begin{equation}
 G_0^{\rm Frob}=\frac{1}{4\kappa_\nu^2}
 =\frac{1}{\hbar^2(u-1)}.
 \label{eq:frobenius-radial}
\end{equation}

For \(X_2=Z\), both diagonal entries of \(B\) equal
\begin{equation}
 b_{11}=b_{22}=\frac{1-2\nu a}{2\kappa_\nu},
 \label{eq:shear-b}
\end{equation}
and
\begin{equation}
 E_2(a)=\frac12a^2+
 \frac{(1-2\nu a)^2}{4\kappa_\nu^2}.
 \label{eq:shear-energy}
\end{equation}
The minimizer and coefficient are
\begin{equation}
 a_\star=\frac{\nu}{\kappa_\nu^2+2\nu^2},
 \qquad
 G_2^{\rm Frob}
 =\frac{1}{4(\kappa_\nu^2+2\nu^2)}
 =\frac{1}{\hbar^2(3u-1)}.
 \label{eq:frobenius-shear}
\end{equation}
Although the shear is system-only generable, its quotient-optimal lift uses
ancilla assistance because the reduction permits \(B\neq0\).

\begin{proposition}[Failure of the unweighted schedule]
The Frobenius quotient has ratio
\begin{equation}
 R_{\rm Frob}(u)=\frac{G_0^{\rm Frob}}{G_2^{\rm Frob}}
 =\frac{3u-1}{u-1},
 \label{eq:frobenius-ratio}
\end{equation}
which differs from the Bures ratio \(R_{\rm B}(u)\) for every \(u>1\).
\end{proposition}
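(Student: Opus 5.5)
The ratio formula follows by dividing the two quotient coefficients already computed, and the comparison with the Bures ratio reduces to a one-line polynomial identity. For the first claim I will take \(G_0^{\rm Frob}=[\hbar^2(u-1)]^{-1}\) from Eq.~\eqref{eq:frobenius-radial} and \(G_2^{\rm Frob}=[\hbar^2(3u-1)]^{-1}\) from Eq.~\eqref{eq:frobenius-shear}. Dividing gives \(R_{\rm Frob}(u)=(3u-1)/(u-1)\) at once.

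The main point to secure is that these two coefficients are the genuine constrained minima. I will briefly re-verify each piece.

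\emph{Reduction to the diagonal ansatz.} The block \(D\) is absent from Eq.~\eqref{eq:constraint} and orthogonal in the Frobenius form, so it vanishes at the optimum. The off-diagonal variables \((b,c,e,f)\) enter only the homogeneous constraint \eqref{eq:offdiagonal-constraint} and a positive sum of squares, so they vanish as well.

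\emph{Radial target.} Eq.~\eqref{eq:radial-energy} is a convex quadratic in \(a\) that is even in \(a\), so its minimum is at \(a=0\) with value \(1/(4\kappa_\nu^2)\).

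\emph{Traceless target.} Eq.~\eqref{eq:shear-energy} is a strictly convex quadratic in \(a\). Setting the derivative to zero gives \(a_\star\). Substituting back yields \(1/(4(\kappa_\nu^2+2\nu^2))\).

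\emph{Conversion to \(u\).} Using \(4\kappa_\nu^2=\hbar^2(u-1)\) and \(4\nu^2=\hbar^2u\), the denominator becomes \(4\kappa_\nu^2+8\nu^2=\hbar^2(u-1+2u)=\hbar^2(3u-1)\). This confirms both coefficients.

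For the second claim, I will compare \(R_{\rm Frob}\) with \(R_{\rm B}(u)=(u+1)/(u-1)\) from Eq.~\eqref{eq:bures-ratio}. Both share the denominator \(u-1>0\), so equality would require \(3u-1=u+1\), i.e. \(u=1\). This value is excluded by \(u>1\) in Eq.~\eqref{eq:isotropic}. In fact \(R_{\rm Frob}(u)-R_{\rm B}(u)=2(u-1)/(u-1)=2>0\), so the Frobenius ratio strictly exceeds the Bures ratio by the constant \(2\) for every \(u>1\).

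It is also legitimate to compare the ratios directly, because the targets \(X_0=\Id_2\) and \(X_2=Z\) have equal Fisher norm. Both are trace-norm \(\sqrt2\) multiples of \(E_0\) and \(E_+\), and the Fisher tensor at an isotropic point is a multiple of the trace form. Hence the normalization of the two targets cancels in the ratio.

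I expect no real obstacle. The only delicate step is justifying that the off-diagonal block decouples, which relies on the diagonal constraints not involving \((b,c,e,f)\). This I will check directly from Eq.~\eqref{eq:constraint} with \(X\) diagonal.
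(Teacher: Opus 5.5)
Your proposal is correct and follows the paper's own argument. You divide \(G_0^{\rm Frob}=[\hbar^2(u-1)]^{-1}\) by \(G_2^{\rm Frob}=[\hbar^2(3u-1)]^{-1}\), and note that equality with \(R_{\rm B}(u)=(u+1)/(u-1)\) would force \(3u-1=u+1\), i.e.\ the excluded value \(u=1\). Your re-check of the two constrained minima, including the decoupling of the off-diagonal variables, holds up, and your observation that \(R_{\rm Frob}(u)-R_{\rm B}(u)=2\) identically is a correct, slightly sharper form of the same conclusion.
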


\begin{proof}
Equations~\eqref{eq:frobenius-radial} and
\eqref{eq:frobenius-shear} give the first equality.  Equality with
Eq.~\eqref{eq:bures-ratio} would require \(3u-1=u+1\), hence \(u=1\), which
lies at the excluded faithful boundary.
\end{proof}

\begin{proposition}[Exact radial agreement]
\label{prop:radial-agreement}
At every \(u>1\), the unweighted Frobenius quotient reproduces the
Bures radial coefficient identically,
\begin{equation}
 G_0^{\rm Frob}
 =g_{{\rm B},\Sigma_\nu}(X_0,X_0)
 =\frac{1}{\hbar^2(u-1)},
 \label{eq:radial-coincidence}
\end{equation}
and its radial minimizer \(A=0\), \(B=Z/(2\kappa_\nu)\) generates the
tangent of the canonical purification family:
\begin{equation}
 K_\star\Gamma_\nu+\Gamma_\nu K_\star^{\mathsf T}
 =\frac{\mathrm d}{\mathrm d\nu}\Gamma_\nu .
 \label{eq:minimizer-is-lift}
\end{equation}
The discrepancy of the unweighted schedule is therefore confined to the
traceless sector, where
\begin{equation}
 G_2^{\rm Frob}=\frac{1}{\hbar^2(3u-1)}
 <\frac{1}{\hbar^2(u+1)}
 =g_{{\rm B},\Sigma_\nu}(X_2,X_2).
 \label{eq:shear-undercost}
\end{equation}
\end{proposition}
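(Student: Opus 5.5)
The proof has three parts: compute the Bures values, check the minimizer generates the purification tangent, and compare the traceless coefficients.

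\textbf{Bures values.} I will take the Fisher normalization so that Eq.~\eqref{eq:bures-relative-metric} multiplies the covariance Fisher metric. At \(\Sigma_\nu=\nu\Id_2\) the classical covariance Fisher metric is \(g_{\rm F}(X,X)=\tfrac12\Tr(\Sigma^{-1}X\Sigma^{-1}X)=\Tr(X^2)/(2\nu^2)\). Hence \(g_{\rm F}(X_0,X_0)=g_{\rm F}(X_2,X_2)=1/\nu^2\), consistent with the remark that the two targets have equal Fisher norm. Multiplying by the sector eigenvalues of Eq.~\eqref{eq:bures-relative-metric} gives
\[
g_{\rm B}(X_0,X_0)=\frac{u}{u-1}\,\frac{1}{\nu^2},
\qquad
g_{\rm B}(X_2,X_2)=\frac{u}{u+1}\,\frac{1}{\nu^2}.
\]
Since \(u/\nu^2=4/\hbar^2\), these become \(4/[\hbar^2(u-1)]\) and \(4/[\hbar^2(u+1)]\).

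These differ from the stated values by the factor \(4\), which matches the \(\tfrac14\) in Eq.~\eqref{eq:frobenius}. So the normalization has to be fixed with care. I would calibrate the Fisher tensor through the quantum Bures metric of Gaussian states, \(g_{\rm B}(X,X)=\tfrac12\Tr(X\,\mathcal L_\Sigma^{-1}(X))\) with the standard Gaussian SLD formula. Alternatively, I would cross-check with the one-mode thermal-state Bures formula in terms of \(\nu\): a radial change of \(\nu\) is a classical thermal family, whose Bures speed is \(\dot\nu^2/(4\nu^2-\hbar^2)\) up to the convention factor. With \(\dot\nu=1\) this gives \(1/(\hbar^2(u-1))\). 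This calibration is the main obstacle. Everything else is algebra, but the claim is an exact identity, so the Bures normalization convention must match the Fubini--Study \(\tfrac14\) convention in \(F_{\rm Frob}\). Once the radial constant is pinned down, the traceless value follows from the same prefactor via the ratio \(R_{\rm B}\) in Eq.~\eqref{eq:bures-ratio}.

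\textbf{Radial agreement and the lift.} With the normalization fixed, the equality with Eq.~\eqref{eq:frobenius-radial} is immediate. For Eq.~\eqref{eq:minimizer-is-lift}, I set \(A=D=0\) and \(B=Z/(2\kappa_\nu)\) in Eq.~\eqref{eq:sp4-block}; the lower block is then \(\Om Z\Om/(2\kappa_\nu)=Z/(2\kappa_\nu)\), since \(\Om Z\Om=Z\). So
\[
K_\star=\frac{1}{2\kappa_\nu}\begin{pmatrix}0&Z\\ Z&0\end{pmatrix}.
\]
Block multiplication, using \(Z^2=\Id\), gives
\[
K_\star\Gamma_\nu+\Gamma_\nu K_\star^{\mathsf T}
=\frac{1}{\kappa_\nu}\begin{pmatrix}\kappa_\nu\Id&\nu Z\\ \nu Z&\kappa_\nu\Id\end{pmatrix}.
\]
This equals \(\mathrm d\Gamma_\nu/\mathrm d\nu\), because \(\mathrm d\kappa_\nu/\mathrm d\nu=\nu/\kappa_\nu\).

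\textbf{Traceless undercost.} Inequality~\eqref{eq:shear-undercost} reduces to \(3u-1>u+1\), that is \(u>1\), using Eq.~\eqref{eq:frobenius-shear}.
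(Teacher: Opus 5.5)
\textbf{Verdict: genuine gap, easily closed.} Your check of Eq.~\eqref{eq:minimizer-is-lift} is exactly the paper's block computation, and the final inequality is correct. Getting \(g_{\rm B}(X_2,X_2)\) from the radial value through \(R_{\rm B}\) is a legitimate shortcut for the paper's direct traceless SLD solve; it works because \(X_0\) and \(X_2\) have equal Fisher norm. The gap is the radial Bures value itself, which is the real content of Eq.~\eqref{eq:radial-coincidence}.

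\textbf{Where the argument fails.} Your only explicit computation gives \(4/[\hbar^2(u-1)]\). You then treat the missing factor as a convention to be matched to the \(\tfrac14\) in Eq.~\eqref{eq:frobenius}, and quote the thermal value ``up to the convention factor.'' That does not prove an exact identity, and the framing is wrong. \(G_0^{\rm Frob}=1/(4\kappa_\nu^2)\) already contains the gate-norm \(\tfrac14\). The Bures metric has no free scale: Uhlmann's theorem, relative to the standard Fubini--Study metric, fixes \(ds_{\rm B}^2=\tfrac14F_Q\). If the Bures scale were adjustable, the identity would be empty, and so would the claim in Remark~\ref{rem:radial-agreement} that radial agreement fixes the gate normalization.

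\textbf{Source of your factor \(4\).} You took the full classical Fisher information \(\tfrac12\Tr(\Sigma^{-1}X\Sigma^{-1}X)\) as the reference tensor. The relative metric \eqref{eq:bures-relative-metric} tends to the identity as \(u\to\infty\), and in that limit the Bures metric is asymptotic to one quarter of the classical Fisher information. So the reference in \eqref{eq:bures-relative-metric} must be \(\tfrac18\Tr(\Sigma^{-1}X\Sigma^{-1}X)\), which gives \(\tfrac{u}{u-1}\cdot\tfrac{1}{4\nu^2}=[\hbar^2(u-1)]^{-1}\). For the same reason, with the gain map \(\mathfrak G\mapsto\Sigma\mathfrak G\Sigma+\tfrac{\hbar^2}{4}\Om\mathfrak G\Om\), your SLD expression with prefactor \(\tfrac12\) returns the quantum Fisher information \(1/\kappa_\nu^2\), not the Bures metric.

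\textbf{How to close it.} Fix \(ds_{\rm B}^2=\tfrac14F_Q\) and compute. Either route below works.

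\emph{The paper's route (SLD gain).} Solve \(X_0=\Sigma_\nu\mathfrak G\Sigma_\nu+\tfrac{\hbar^2}{4}\Om\mathfrak G\Om\). Since \(\Om\Id_2\Om=-\Id_2\), this gives \(\mathfrak G=\Id_2/\kappa_\nu^2\), and then \(g_{\rm B}(X_0,X_0)=\tfrac18\Tr(\mathfrak GX_0)=1/(4\kappa_\nu^2)\).

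\emph{Your thermal route, carried through.} The radial family is diagonal in the Fock basis, with weights \(p_n=\bar n^n/(\bar n+1)^{n+1}\) and \(\nu=\hbar(\bar n+\tfrac12)\). Its classical Fisher information in \(\nu\) is \(1/[\hbar^2\bar n(\bar n+1)]=1/\kappa_\nu^2\). For a commuting family the Bures metric is one quarter of this, namely \(1/(4\nu^2-\hbar^2)=[\hbar^2(u-1)]^{-1}\).

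The thermal route is more elementary than the SLD solve, because radial motion is classical. It covers only the radial direction, though. For \(X_2\) you still need either your ratio argument, which rests on the paper's sector eigenvalues \eqref{eq:bures-ratio}, or the direct gain \(\mathfrak G=Z/(\nu^2+\hbar^2/4)\) obtained from \(\Om Z\Om=Z\).
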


\begin{proof}
The radial SLD gain at \(\Sigma_\nu\) with target \(X_0=\Id_2\) is
\(\mathfrak G=\Id_2/\kappa_\nu^2\), since
\(\Om\Id_2\Om=-\Id_2\); hence
\(g_{\rm B}(X_0,X_0)=\tfrac18\Tr(\mathfrak GX_0)
=1/(4\kappa_\nu^2)=[\hbar^2(u-1)]^{-1}\), which is
Eq.~\eqref{eq:frobenius-radial}.  For
Eq.~\eqref{eq:minimizer-is-lift}, with \(B=bZ\) one has
\(\Om B^{\mathsf T}\Om=bZ\), and block multiplication against
Eq.~\eqref{eq:purification} gives system and ancilla blocks
\(2b\kappa_\nu\Id_2\) and off-diagonal blocks \(2b\nu Z\); at
\(b=1/(2\kappa_\nu)\) these equal \(\Id_2\) and
\((\nu/\kappa_\nu)Z=(\mathrm d\kappa_\nu/\mathrm d\nu)Z\),
matching \(\partial_\nu\Gamma_\nu\) blockwise.  The traceless gain
\(\mathfrak G=Z/(\nu^2+\hbar^2/4)\) follows from \(\Om Z\Om=Z\), giving
\(g_{\rm B}(X_2,X_2)=[\hbar^2(u+1)]^{-1}\), and
\(3u-1>u+1\) for \(u>1\).
\end{proof}

\begin{remark}[What the radial agreement fixes]
\label{rem:radial-agreement}
Three consequences.  First, along radial motion the gate quotient, the
Uhlmann quotient, and the Bures length coincide: the Frobenius-optimal
circuit is purely entangling and tracks the same lift whose
Fubini--Study speed saturates the Uhlmann bound
(Proposition~\ref{prop:gaussian-saturation}), and the unweighted
schedule prices it at exactly its Bures cost.  Second, the overall
scale \(1/4\) in Eq.~\eqref{eq:frobenius} is thereby no longer a
convention: rescaling the norm preserves the ratio
\(R_{\rm Frob}\) but destroys Eq.~\eqref{eq:radial-coincidence}, so
radial agreement fixes the schedule's absolute normalization.  Third,
the sector pattern is sharp: the quotient and the state geometry agree
exactly on the sector where the reduction admits only entangling
controls, and disagree exactly where system-only and ancilla-assisted
routes compete --- there the circuit exploits the cheaper mixture,
Eq.~\eqref{eq:shear-undercost}, undercutting even the system-only cost
\([2\hbar^2u]^{-1}\).  The failure of the unweighted schedule is thus a
failure of shear pricing alone.
\end{remark}
\subsection{The two-weight Nielsen penalty schedule}

To expose the remaining normalization freedom, consider
\begin{equation}
 F_{\alpha,\beta,\gamma}(K)^2
 =\frac14\left[
 \alpha\Tr(A^{\mathsf T}A)
 +2\beta\Tr(B^{\mathsf T}B)
 +\gamma\Tr(D^{\mathsf T}D)
 \right],
 \qquad \alpha,\beta,\gamma>0.
 \label{eq:weighted-norm}
\end{equation}
Again \(D=0\) at the quotient minimum.  The same constrained minimization
replaces Eqs.~\eqref{eq:radial-energy} and \eqref{eq:shear-energy} by
\begin{align}
 E_0^{(\alpha,\beta)}(a)
 &=\frac{\alpha}{2}a^2
 +\frac{\beta(1+4\nu^2a^2)}{4\kappa_\nu^2},
 \label{eq:weighted-radial-energy}\\
 E_2^{(\alpha,\beta)}(a)
 &=\frac{\alpha}{2}a^2
 +\frac{\beta(1-2\nu a)^2}{4\kappa_\nu^2}.
 \label{eq:weighted-shear-energy}
\end{align}
The radial minimizer remains \(a=0\), while the shear minimizer is
\begin{equation}
 a_\star^{(\alpha,\beta)}
 =\frac{\beta\nu}{\alpha\kappa_\nu^2+2\beta\nu^2}.
 \label{eq:weighted-minimizer}
\end{equation}
Substitution gives
\begin{align}
 G_0^{(\alpha,\beta)}
 &=\frac{\beta}{\hbar^2(u-1)},
 \label{eq:weighted-radial}\\
 G_2^{(\alpha,\beta)}
 &=\frac{\alpha\beta}
 {\hbar^2\left[\alpha(u-1)+2\beta u\right]}.
 \label{eq:weighted-shear}
\end{align}
Therefore
\begin{equation}
 R_{\rm N}(u)
 :=\frac{G_0^{(\alpha,\beta)}}{G_2^{(\alpha,\beta)}}
 =1+2\frac{\beta}{\alpha}\frac{u}{u-1}.
 \label{eq:weighted-ratio}
\end{equation}

\begin{theorem}[Restricted fixed-schedule obstruction]
\label{thm:two-weight-no-go}
No state-independent choice of positive constants \(\alpha,\beta,\gamma\)
in Eq.~\eqref{eq:weighted-norm} reproduces the Bures radial-to-traceless
squared-cost ratio throughout the faithful isotropic family.  Pointwise
agreement at a specified \(u\) requires
\begin{equation}
 \frac{\beta}{\alpha}=\frac1u.
 \label{eq:required-weight}
\end{equation}
\end{theorem}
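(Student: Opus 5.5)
The plan is to compare the closed-form ratio \eqref{eq:weighted-ratio} with the Bures ratio \eqref{eq:bures-ratio} and solve for the weight ratio. Before doing so, I would confirm that \eqref{eq:weighted-ratio} is the correct quotient ratio for every positive triple \((\alpha,\beta,\gamma)\). Since \(\gamma\) multiplies only \(\Tr(D^{\mathsf T}D)\), and \(D\) is absent from the constraint \eqref{eq:constraint}, the optimum has \(D=0\) and \(\gamma\) drops out. The off-diagonal variables \((b,c,e,f)\) again contribute a positive-definite quadratic form, now with weights \(\alpha,\beta\), and appear only in the homogeneous constraint \eqref{eq:offdiagonal-constraint}, so they still vanish at the minimum. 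The diagonal reduction \eqref{eq:diagonal-ansatz} therefore persists.

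Next I would minimize the one-variable quadratics \eqref{eq:weighted-radial-energy} and \eqref{eq:weighted-shear-energy}. For the radial energy, the coefficient of \(a^2\) is positive and there is no linear term, so \(a=0\) and \(G_0^{(\alpha,\beta)}=\beta/(4\kappa_\nu^2)\). For the shear energy, the minimum of \(\tfrac{\alpha}{2}a^2+\tfrac{\beta}{4\kappa_\nu^2}(1-2\nu a)^2\) is the standard value
\[
G_2^{(\alpha,\beta)}=\frac{\alpha\beta}{4(\alpha\kappa_\nu^2+2\beta\nu^2)} .
\]
Using \(4\kappa_\nu^2=\hbar^2(u-1)\) and \(4\nu^2=\hbar^2u\), these become \eqref{eq:weighted-radial} and \eqref{eq:weighted-shear}. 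Their quotient is
\[
R_{\rm N}=\frac{\alpha(u-1)+2\beta u}{\alpha(u-1)}=1+2\frac{\beta}{\alpha}\frac{u}{u-1}.
\]

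For pointwise agreement, I would set \(R_{\rm N}(u)=R_{\rm B}(u)\). Subtracting \(1\) from both sides gives
\[
2\frac{\beta}{\alpha}\frac{u}{u-1}=\frac{2}{u-1},
\]
so \(\beta/\alpha=1/u\), which is \eqref{eq:required-weight}. The solution is unique because \(R_{\rm N}\) is strictly increasing in \(\beta/\alpha\).

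For the global claim, suppose fixed constants reproduced \(R_{\rm B}\) on the whole family \(u>1\). Then \(\beta/\alpha=1/u\) would have to hold for every \(u>1\), for example at both \(u=2\) and \(u=3\), which forces \(1/2=1/3\), a contradiction. The only step needing care is justifying the reduction to diagonal generators and \(D=0\) in the weighted norm. That step is a direct reweighting of the Frobenius argument, since each block is still weighted positively and independently.
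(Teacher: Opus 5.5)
Your proposal is correct and follows the paper's route. The paper's proof equates Eq.~\eqref{eq:weighted-ratio} with Eq.~\eqref{eq:bures-ratio} to get $\beta/\alpha=1/u$ and observes that no constant ratio can track a varying $u$; your re-verification of the weighted minimization ($D=0$, vanishing off-diagonal block, the shear minimum) and your two-point contradiction at $u=2,3$ fill in the same steps.
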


\begin{proof}
Equating Eqs.~\eqref{eq:weighted-ratio} and \eqref{eq:bures-ratio} gives
\begin{equation}
 1+2\frac{\beta}{\alpha}\frac{u}{u-1}
 =1+\frac{2}{u-1},
\end{equation}
and hence Eq.~\eqref{eq:required-weight}.  Since \(u\) varies along the
family, no constant ratio \(\beta/\alpha\) satisfies the equality on an open
interval.
\end{proof}

Theorem~\ref{thm:two-weight-no-go} is deliberately restricted.  It excludes
the unweighted schedule and the natural block-diagonal two-weight family, but
not every positive inner product on \(\mathfrak{sp}(4,\mathbb R)\).  General
fixed cross terms compatible with a chosen compact symmetry have not yet been
classified.

\section{The golden crossing as a pointwise corollary}
\label{sec:golden}

The relative Bures and covariance Fisher cometric determinants coincide when
\begin{equation}
 \frac{u-1}{u}\left(\frac{u+1}{u}\right)^2=1.
 \label{eq:det-crossing}
\end{equation}
The unique solution in \(u>1\) is
\(u=\varphi=(1+\sqrt5)/2\).  At this point,
\begin{equation}
 Q_{\rm gold}=\varphi^{-2}\PiZ+\varphi\PiT,
 \qquad
 G_{\rm gold}=Q_{\rm gold}^{-1}
 =\varphi^2\PiZ+\varphi^{-1}\PiT,
 \label{eq:gold-operators}
\end{equation}
so the Bures radial-to-traceless squared-cost ratio is
\begin{equation}
 R_{\rm B}(\varphi)=\varphi^3.
 \label{eq:gold-bures-ratio}
\end{equation}

By contrast, the unweighted quotient gives
\begin{equation}
 R_{\rm Frob}(\varphi)=2\varphi+3\neq\varphi^3.
 \label{eq:gold-frobenius-ratio}
\end{equation}
The two-weight schedule agrees at this single point only if
\begin{equation}
 \frac{\beta}{\alpha}=\varphi^{-1}.
 \label{eq:gold-weight}
\end{equation}
Thus the golden ratio is an exact state-geometric specialization, not a
parameter-free consequence of an unweighted Gaussian gate algebra.  A
pointwise Nielsen match is possible, but only after the same scale has been
inserted into the gate schedule.

In the holographic-complexity context, where identifications between
circuit cost and geometric quantities are sought
\cite{Susskind2016,Brown2016}, the crossing has a limited but precise
significance.  A fixed two-weight schedule can agree with the Bures
penalty at exactly one point of the isotropic family, and only by
importing the state-geometric value \(\beta/\alpha=\varphi^{-1}\) into
the hardware.  The golden point therefore measures the mismatch between
the two minimizations; it does not resolve it.
\section{Scope and the remaining fixed-norm problem}
\label{sec:scope}

The calculations above establish four statements with different logical
status:
\begin{enumerate}
 \item The Bures metric is exactly the Fubini--Study purification quotient;
 for the one-mode Gaussian family the optimum is Gaussian.
 \item Radial covariance motion is transverse to the system-only symplectic
 orbit and therefore requires an enlarged or nonunitary realization.
 \item The unweighted right-invariant Frobenius quotient does not reproduce
 the Bures sector ratio.
 \item No constant schedule in the block-diagonal two-weight family
 Eq.~\eqref{eq:weighted-norm} reproduces that ratio on the full isotropic
 family.
\end{enumerate}

These local statements do not yet determine a global preparation complexity.
That problem additionally requires a reference state, a target, an ancillary
dimension rule, allowed endpoints, and minimization over complete paths.  Nor
does the algebraic Schur lift of the Bures cometric determine a physical
channel cost: its fiber is indefinite and has a different purpose.

\begin{problem}[Fixed positive realization]
Classify the positive state-independent inner products on a fixed enlarged
Gaussian gate algebra, including symmetry-allowed cross terms, whose quotient
under the reduced-covariance map equals the Bures metric on a nontrivial open
family of mixed Gaussian states.  If no such inner product exists under a
natural compact-invariance and locality class, determine the largest class
for which a no-go theorem can be proved.
\end{problem}

\begin{remark}[A four-dimensional equivariance constraint beyond
Gaussian quotients]
\label{rem:d4-equivariance}
The sector split of Eq.~\eqref{eq:sector-split} carries weights \(0\)
and \(\pm2\) under the internal phase-space rotation of a single mode.
In spacetime dimension four, and only there, the little group
\(SO(d-2)\) of a causal-horizon cut is likewise a \(U(1)\), and the
trace and transverse--traceless graviton polarizations carry the same
weights.  Any macroscopic lift of the sector costs computed here is
therefore constrained by \(U(1)\) equivariance to enter as an angular
integral over cut directions, and a necessary --- not sufficient ---
consistency target for such a lift is the gauge-fixed Fierz--Pauli
coefficient ratio \(-(d-1)/d\).  The group isomorphism does not by
itself supply the intertwiner, and the construction does not extend
past \(d=4\), where the little group is nonabelian.  We record this as
a constraint on any future identification, not as a result of the
present quotient analysis.
\end{remark}

This problem is the natural location for further Nielsen-complexity
work, particularly in view of holographic proposals relating boundary
complexity to bulk geometry \cite{Susskind2016,Brown2016}.
Proposition~\ref{prop:invariant-no-go} shows that a bi-invariant
quadratic cost --- right-invariant and invariant under the full adjoint
action --- is necessarily indefinite or degenerate for continuous
variables, so every positive schedule contains reference structure not
fixed by symplectic symmetry.
Theorem~\ref{thm:two-weight-no-go} shows that within the block-diagonal
two-weight class this reference structure cannot be chosen
state-independently so as to reproduce the Bures penalty.  A dictionary
identifying bulk geometry with a fixed boundary schedule must therefore
specify how the reference structure is chosen and over which norm class
the identification is claimed; whether some fixed norm with
symmetry-allowed cross terms evades the obstruction is precisely the
open problem above.

\section{Conclusion}

Gaussian Bures geometry supplies a rigorous purification cost, while
Nielsen complexity requires a fixed positive norm on generators.  At
the algebraic level the two cannot be identified by symmetry alone: the
unique adjoint-invariant quadratic form on the symplectic algebra is
indefinite, so every positive schedule contains reference structure
beyond symplectic invariance
(Proposition~\ref{prop:invariant-no-go}).

For the minimal two-mode lift the comparison is exact and
sector-resolved.  The unweighted Frobenius quotient reproduces the
Bures radial coefficient identically --- its optimal circuit generates
the canonical purification family, so the gate quotient, the Uhlmann
quotient, and the Bures length coincide on radial motion
(Propositions~\ref{prop:gaussian-saturation}
and~\ref{prop:radial-agreement}) --- while it underprices the traceless
sector at every faithful \(u\).  Within the constant two-weight family
the discrepancy cannot be repaired: agreement throughout the isotropic
family would force the state-dependent schedule
\(\beta/\alpha=1/u\), and pointwise agreement is available only at the
determinant crossing \(u=\varphi\), at the price of inserting
\(\beta/\alpha=\varphi^{-1}\) into the gate schedule
(Theorem~\ref{thm:two-weight-no-go}).

These are restricted no-go statements, not a general obstruction.  The
pattern they expose --- agreement exactly where the reduction admits
only entangling controls, failure exactly where system-only and
ancilla-assisted routes compete --- suggests that the Bures quotient is
not a fixed-schedule cost in any natural class, but the classification
of positive norms with symmetry-allowed cross terms remains open
(Section~\ref{sec:scope}).  Whether some such norm realizes the
quotient, or a no-go extends to the full class, determines how sharply
holographic identifications of state geometry with circuit cost must be
qualified.  The behavior of the quotient under open-system dynamics and
its role at thermodynamic junctions are taken up in companion work.
\small
\section*{Acknowledgments}
The author acknowledges the use of AI assistants for structural brainstorming, language refinement, and \LaTeX{} editing during preparation of the manuscript. The author bears full responsibility for all arguments, equations, and citations.

\bibliographystyle{unsrtnat}
\bibliography{complexity_references}

\end{document}